\documentclass[a4paper,12pt]{article}
\usepackage{amsmath}
\usepackage{amsfonts}
\usepackage{amssymb}
\usepackage{amsthm}
\usepackage{graphicx}
\usepackage{ifthen}
\usepackage{verbatim}
\usepackage{amsmath}
\usepackage{setspace}
\usepackage{amsfonts}
\usepackage{amssymb}
\usepackage{amsthm}
\usepackage{lscape}
\usepackage{setspace}
\usepackage{listings}
\usepackage{amsmath}
\usepackage{amsfonts}
\usepackage{amssymb}
\usepackage{amsthm}
\usepackage{ifthen}
\usepackage{verbatim}
\newtheorem{thm}{Theorem}[section]
\newtheorem{ex}[thm]{Example}
\newtheorem{coro}[thm]{Corollary}

\newtheorem{rem}[thm]{Remark}
\newtheorem{prop}[thm]{Proposition}
\newtheorem{proposal}[thm]{Proposal}

\newcommand{\Poincare}{{Poincar\'e }}
\newcommand{\Mobius}{{M\"obius }}

\def\1{\mathbf{1}}
\def\R{\mathbb{R}}
\def\P{\mathbb{P}}

\def\Q{\mathbb{Q}}
\def\E{\mathbb{E}}

\def\dd{\mathrm{d}}

\def\tp{\top}
\def\F{\mathcal{F}}

\begin{document}
\title{Isometric Invariant Quantification of  \\
Gaussian Divergence over \Poincare Disc}

\author{Levent Ali Meng\"ut\"urk\thanks{University College London, Department of Mathematics, l.menguturk@ucl.ac.uk, \, Artificial Intelligence and Mathematics Research Lab, levent@aimresearchlab.com}}
\date{}
\maketitle

\begin{abstract}
The paper presents a statistical geometric duality between the spherical squared-Hellinger distance and a hyperbolic isometric invariant of the \Poincare disc under the action of the general \Mobius group. Motivated by the geometric connection, we propose the usage of the $\mathcal{L}^{2}$-embedded hyperbolic isometric invariant as an alternative way to quantify statistical divergence between Gaussian measures as a contribution to information theory and machine learning.
\end{abstract} 
{\bf Keywords:} Statistical divergence, hyperbolic geometry, \Poincare disc, \Mobius transformation \\

\section{Introduction}
In recent years, statistical machine learning has increasingly encountered high-dimensional, hierarchically structured data that standard Euclidean metrics fail to capture adequately. Canonically, when comparing two Gaussian distributions, traditional divergence measures, such as the Kullback-Leibler divergence, yield metric distortions by treating the underlying parameter space as flat, forcing Euclidean assumptions onto a geometric structure that is inherently curved. Essentially, since the space of Gaussian distributions naturally inherits a hyperbolic geometry of constant negative curvature via the Fisher information metric, using flat distances that deviate from the induced geodesic path constitutes a geometric distortion in the representation of data. For example, for tree-like hierarchical relationships, using an Euclidean metric flattens the tree, which implicitly forces straight-line paths, where curved-paths are natural. In order to harness the geometric advantages, this paper proposes a new statistical divergence metric rooted in hyperbolic geometry of the \Poincare disc, providing a mathematically principled and robust framework for measuring distributional differences in non-Euclidean machine learning tasks.

The main objective of this work is twofold: (i) present a geometric relationship between the spherical squared-Hellinger distance -- commonly used in machine learning and artificial intelligence -- and a hyperbolic isometric invariant of the \Poincare disc under the action of the general \Mobius group, and (ii) provide a closed-form equation for the proposed $\mathcal{L}^{2}$-embedded statistical divergence measure. More specifically, motivated by the spherical-hyperbolic geometric duality, we propose an $\mathcal{L}^{2}$-embedded hyperbolic isometric invariant as an alternative way to quantify statistical divergence between two Gaussian probability measures. We shall keep this note reasonably concise as a theoretical contribution to statistical information theory and machine learning, and leave its applications using real data for future. To the best of our knowledge, the most relevant observation was drafted in \cite{9}, which we shall formalise, deepen and generalise hereafter.

For the rest of this paper, consider a measurable space $(\Omega,\F)$ endowed with probability measures $\P$ and $\Q$ that are absolutely continuous with respect to the Lebesgue measure $\mathbb{L}$. To minimize notational burden, we shall omit explicit reference to $\mathbb{L}$ and local coordinates, using the shorthand $\dd\mathbb{P}$ (and $\dd\mathbb{Q}$) directly to represent the Radon-Nikodým density $\frac{\dd\mathbb{P}}{\dd\mathbb{L}}$ (and $\frac{\dd\mathbb{Q}}{\dd\mathbb{L}}$) under the integral sign. For our purposes, $\P$ and $\Q$ are Gaussian measures. Using the language of measure theory, the squared-Hellinger distance between $\P$ and $\Q$ (see \cite{aa1}), which we denote as $\Phi(\P,\Q)$, is given by
\begin{align}
\Phi(\P,\Q) = \frac{1}{2}\int_{\Omega}\left( \sqrt{\dd\P(\omega)} - \sqrt{\dd\Q(\omega)} \right)^2. \label{hellinger}
\end{align}
We are now in position to present our main statement before providing the details in the next section.
\begin{proposal}
\label{dualitystatementinitalpresentation}
Define $\Psi(\P,\Q)$ as follows:
\begin{align} 
\Psi(\P,\Q) = \frac{2\int_{\Omega}\left( \dd\P(\omega) - \dd\Q(\omega) \right)^2}{\left(1 - \int_{\Omega}\dd\P(\omega)^2\right)\left(1 - \int_{\Omega}\dd\Q(\omega)^2\right)}. \label{measuretheorticexpinitialpresentation}
\end{align}
Then, $\Phi$ in (\ref{hellinger}) and $\Psi$ in (\ref{measuretheorticexpinitialpresentation}) establish a spherical-hyperbolic duality under an $\mathcal{L}^{2}$-embedding.
\end{proposal}
In the next section, we shall unpack Proposal \ref{dualitystatementinitalpresentation} in detail and provide the mathematical arguments leading to it, which will clarify the statement for the reader. We shall also provide a closed-form solution to (\ref{measuretheorticexpinitialpresentation}) in order to enable its implementation more broadly in machine learning and artificial intelligence applications. The closed-form equation will also allow us to quantify a distributional distance between two Gaussian processes as they progress in time. Finally, we shall provide the parametric asymptotics of (\ref{measuretheorticexpinitialpresentation}) across relevant limits and provide a multivariate generalization.

\section{Geometric Duality}
For the rest of this paper, let $\P$ and $\Q$ be measures on $\R$, where $p:\R\rightarrow\R_+$ and $q:\R\rightarrow\R_+$ are the corresponding Gaussian probability density functions. The choice of the state-space $\R$ here is for parsimony and will later be generalized to $\R^n$ for $n\geq 1$. We identify the parameters of $p$ and $q$ as $\Theta_p = \{\mu_p,\sigma_p^{2}\}$ and $\Theta_q = \{\mu_q,\sigma_q^{2}\}$, respectively, where $\mu_p, \mu_q \in (-\infty,\infty)$ and $\sigma_p,\sigma_q\in(0,\infty)$.
Denoting $||.||_{\mathcal{L}^{2}}$ as the $\mathcal{L}^{2}$-norm and using (\ref{hellinger}), we have the following:
\begin{align}
\Phi(\P,\Q)  &= \frac{1}{2}||\sqrt{p}-\sqrt{q}||^{2}_{\mathcal{L}^{2}}. \label{hellingerl2norm}
\end{align}
Since $p$ and $q$ are non-negative functions and $||\sqrt{p}||_{\mathcal{L}^{2}} = ||\sqrt{q}||_{\mathcal{L}^{2}} = 1$, both $\sqrt{p}$ and $\sqrt{q}$ determine points on the positive-orthant of the unit-sphere $\mathcal{S}^+ \subset \mathcal{L}^{2}$.
\begin{rem}
Note that $(\mathcal{S}, \left\langle . \right\rangle)$ is a Riemannian manifold with constant curvature $K =+1$, given that the inner product $\left\langle . \right\rangle$ on $\mathcal{L}^{2}$ is the Riemannian metric on $\mathcal{S}$. 
\end{rem}
The geodesics between any two points on $\mathcal{S}$ are the great-circles, and accordingly, an angle using the $\mathcal{L}^{2}$-inner product can be defined via
\begin{align} 
\cos \left(\textbf{d}_{\mathcal{S}}(\P,\Q)\right) &= \frac{\left\langle \sqrt{p},\sqrt{q} \right\rangle}{\left\langle \sqrt{p},\sqrt{p} \right\rangle^{\frac{1}{2}} \left\langle \sqrt{q},\sqrt{q} \right\rangle^{\frac{1}{2}}}  =\int_{\R}\sqrt{p(x)}\sqrt{q(x)}\dd x =1-\frac{1}{2}\int_{\R}\left( \sqrt{p(x)} - \sqrt{q(x)} \right)^2\dd x, \notag 
\end{align}
where $\textbf{d}_{\mathcal{S}}(\P,\Q)$ is the Bhattacharyya angle between $\P$ and $\Q$ -- the angle from the center of $\mathcal{S}$ subtended to the endpoints on $\mathcal{S}^{+}$ -- which is equivalent to the spherical distance on $\mathcal{S}^+$ with values in $[0,\pi/2]$. Therefore, using (\ref{hellingerl2norm}), the map $\Phi$ can be rewritten in terms of the cosine of the spherical distance on $\mathcal{S}^+$ as follows:
\begin{align}
\Phi(\P,\Q)  &= 1 - \cos \left(\textbf{d}_{\mathcal{S}}(\P,\Q)\right). \label{bhat}
\end{align}
In information geometry, it is well-known that the parameter space of Gaussian measures forms a 2-dimensional differentiable manifold locally diffeomorphic to $\mathbb{R}^{2}$ (see, for example, \cite{2}), which is a hyperbolic space with constant curvature $K=-1/2$. If this manifold, which we shall denote as $\mathcal{M}$, is endowed with the Fisher information metric, then $\mathcal{M}$ becomes a Riemannian manifold. Accordingly, it is possible to define a distance between $\P$ and $\Q$ via this metric by integrating the infinitesimal line element along the geodesic connecting the corresponding points on $\mathcal{M}$ -- determined by the parameters of $p$ and $q$, respectively -- given by
\begin{align}
\textbf{d}_{\mathcal{M}}(\P,\Q)&=\sqrt{2}\left|\log\left(\frac{1+\zeta(p,q)}{1-\zeta(p,q)}\right)\right| \notag \\
&=2\sqrt{2}\tanh^{-1}(\zeta(p,q)), \label{eq:riemanniandistance1}
\end{align}
where $\zeta$ is a function of the parameters of $p$ and $q$:
\begin{align} \label{eq:FRzeta}
\zeta(p,q)=\left(\frac{(\mu_{q}-\mu_{p})^{2}+2(\sigma_{q}-\sigma_{p})^{2}}{(\mu_{q}-\mu_{p})^{2}+2(\sigma_{q}+\sigma_{p})^{2}} \right)^{\frac{1}{2}}.
\end{align}
The expression in (\ref{eq:riemanniandistance1}) holds when $\mu_{q}\neq\mu_{p}$ and $\sigma_{q}\neq\sigma_{p}$, and the metric takes alternative forms when $\mu_{q}=\mu_{p}$ or $\sigma_{q}=\sigma_{p}$ (see \cite{a1}), which we omit from this note. One can adjust the lengths on $\mathcal{M}$ measured in an alternative unit $R=-\sqrt{-K}/K=\sqrt{2}$, which is analogous to the radian on $\mathcal{S}$.

This geometrical perspective on Gaussian measures is what we shall use to establish a connection between the squared-Hellinger distance and a hyperbolic isometric invariant of the \Poincare disc under the action of the general \Mobius group. First, recall that the general \Mobius group forms a group of transformations of the Riemannian sphere $\mathbb{\overline{C}} = \mathbb{C} \cup \{\infty\}$ (a topological construction, namely one-point compactification), where geometric quantities (e.g. hyperbolic lengths and hyperbolic angles) are invariant under its action. More specifically, a Möbius transformation is a holomorphic function $\eta^{*}:\mathbb{\overline{C}}\rightarrow\mathbb{\overline{C}}$ that satisfies the functional form: 
\begin{align}
\eta^{*}(z)=(az+b)/(cz+d), \notag 
\end{align}
where $a,b,c,d\in\mathbb{C}$ and $ad-bc\ne0$. The general Möbius group, which we shall denote as Möb, is generated by the set of Möbius transformations and the set of complex conjugations, such that $\eta\in \text{Möb}$ is the composition given by 
\begin{align}
\eta=C \circ \eta^{*}_{k} \circ \ldots \circ C \circ \eta^{*}_{1}, \notag 
\end{align}
for some $k\geq1$, where each $\eta^{*}_{j}$ is a Möbius transformation, $C(z)=\overline{z}$ for $z\in\mathbb{C}$ and $C(\infty)=\infty$. We highlight that $C$ is a homeomorphism of $\mathbb{\overline{C}}$, and Möb is equal to the set of homeomorphisms of $\mathbb{\overline{C}}$ that take circles in $\mathbb{\overline{C}}$ to circles in $\mathbb{\overline{C}}$. In fact, it can be shown that elements of Möb are conformal homeomorphisms of $\mathbb{\overline{C}}$, and 
\begin{align}
\text{Möb}(\mathcal{W})=\{\eta\in\text{Möb} \,|\, \eta(\mathcal{W})=\mathcal{W}\}, \notag 
\end{align}
is equal to the group of isometries of the hyperbolic space 
\begin{align}
\label{hyoerbolicspaceone}
\mathcal{W}=\{z\in\mathbb{C} : \Im(z)>0\}, 
\end{align}
given that $\Im(z)$ is the imaginary part of $z$. 
\begin{rem}
The space $\mathcal{W}$ has constant curvature $K=-1$. 
\end{rem}
Using Poincaré uniformization theorem, the metric on $\mathcal{M}$ can be transformed to the metric on $\mathcal{W}$, since $\mathcal{M}$ and $\mathcal{W}$ are conformally equivalent. This can be achieved by adjusting the metric on $\mathcal{M}$ via multiplying it with the positive constant $1/R = 1/\sqrt{2}$. Then, using (\ref{eq:riemanniandistance1}), the distance between the points determined by $p$ and $q$ mapped on $\mathcal{W}$ is given by 
\begin{align}
\textbf{d}_{\mathcal{W}}(\P,\Q)=2\tanh^{-1}(\zeta(p,q)), \label{distanceexpinW} 
\end{align}
where the function $\zeta$ is defined as in (\ref{eq:FRzeta}).
On the other hand, the hyperbolic space of Poincaré disc model is the unit disc $\mathcal{D}$ in the complex plane $\mathbb{C}$ such that
\begin{align}
\label{hyoerbolicspacetwo}
\mathcal{D}=\{z^{*}\in\mathbb{C} : ||z^{*}|| < 1 \},
\end{align}
where $||.||$ is the Euclidean norm. Since $\mathcal{W}$ and $\mathcal{D}$ are both in $\mathbb{\overline{C}}$, it is possible to identify a family of $\eta\in\text{Möb}$, such that $\eta:\mathcal{W}\rightarrow\mathcal{D}$. Essentially, Möb allows to use $\mathcal{W}$ in (\ref{hyoerbolicspaceone}) and $\mathcal{D}$ in (\ref{hyoerbolicspacetwo}) interchangeably when modelling hyperbolic spaces. In particular, if $z$ and $\kappa$ are points in $\mathcal{W}$ and $z^{*}$ is a point in disc $\mathcal{D}$, then
\begin{align} \label{MobspecificHtoD}
z^{*}=e^{i\theta}\frac{z-\kappa}{z-\overline{\kappa}},
\end{align}
is a Möbius transformation conformally mapping $\mathcal{W}$ to $\mathcal{D}$, where $\kappa\in\mathcal{W}$ is an arbitrary point mapped to the center of the disk $\mathcal{D}$, $\theta$ rotates the disk and $z^{*}\in\mathcal{D}$ is the corresponding point of $z\in\mathcal{W}$. 
Accordingly, using Möb, it is possible to compute distances between two points on $\mathcal{D}$ starting from distances on $\mathcal{W}$, or conformally map the points on $\mathcal{W}$ to points on $\mathcal{D}$ using (\ref{MobspecificHtoD}), and calculate the distances on $\mathcal{D}$ directly -- hence, Möb allows for 
\begin{align}
\textbf{d}_{\mathcal{W}}(\P,\Q) \mapsto \textbf{d}_{\mathcal{D}}(\P,\Q). \label{isometrydispres}
\end{align}
Therefore, there is a natural way to characterise Gaussian measures and their distances on the Poincaré disc $\mathcal{D}$ starting from $\mathcal{M}$ followed by $\mathcal{W}$, using Möb.
We are now in the position to highlight an isometric invariant of $\mathcal{D}$ under the action of 
\begin{align}
\text{Möb$(\mathcal{D})=\{\eta\in\text{Möb} \,|\, \eta(\mathcal{D})=\mathcal{D}\}$}, \notag
\end{align}
which we denote by $\Psi$, as follows:
\begin{align} 
\label{eq:isometricinvariantofpoincaredisc11}
\Psi(x^{*},y^{*})=\frac{2||x^{*}-y^{*}||^{2}}{(1-||x^{*}||^{2})(1-||y^{*}||^{2})},
\end{align}
for $||x^{*}|| < 1$ and $||y^{*}||< 1$, where $x^{*}$ and $y^{*}$ are points on $\mathcal{D}$. It can be shown that the hyperbolic isometric invariant $\Psi$ characterises the distance $\textbf{d}_{\mathcal{D}}$ on $\mathcal{D}$ (see \cite{1}). Thus, if we choose $x^{*}$ and $y^{*}$ in (\ref{eq:isometricinvariantofpoincaredisc11}) as points determined by $\P$ and $\Q$ conformally mapped to $\mathcal{D}$ from $\mathcal{W}$, we can write the following representation:
\begin{align} \label{eq:comparisonpoint1}
\Psi(\P,\Q)=\cosh\left(\textbf{d}_{\mathcal{D}}(\P,\Q)\right)-1. 
\end{align}
Using (\ref{distanceexpinW}), (\ref{isometrydispres}) and (\ref{eq:comparisonpoint1}), we have the following:
\begin{align}
\Psi(\mathbb{P}, \mathbb{Q}) &=\cosh\left(\textbf{d}_{\mathcal{D}}(\P,\Q)\right)-1 = \cosh\left(\textbf{d}_{\mathcal{W}}(\mathbb{P},\mathbb{Q})\right) - 1 \notag \\
&= \cosh\left(2\tanh^{-1}(\zeta(\mathbb{P},\mathbb{Q}))\right) - 1 = \frac{2\zeta^2(\mathbb{P},\mathbb{Q})}{1 - \zeta^2(\mathbb{P},\mathbb{Q})}. \label{closedformnewmetricexp}
\end{align}
Thus, plugging (\ref{eq:FRzeta}) into (\ref{closedformnewmetricexp}), we get
\begin{align}
\Psi(\mathbb{P}, \mathbb{Q}) = \frac{(\mu_p - \mu_q)^2 + 2(\sigma_p - \sigma_q)^2}{4\sigma_p\sigma_q}. \label{closefformfinalpoincare}
\end{align}
Note that the closed-form in (\ref{closefformfinalpoincare}), when compared to (\ref{eq:FRzeta}) cancels out the mean parameters from the denominator, while retaining the Fisher information weighting on the variance component in the numerator. In addition, (\ref{closefformfinalpoincare}) provides a simpler mathematical form when compared to (\ref{eq:riemanniandistance1}).

It can be observed that the spherical squared-Hellinger distance $\Phi$ in (\ref{bhat}) is closely related to the hyperbolic isometric invariant $\Psi$ in (\ref{eq:comparisonpoint1}) from a geometric stance. We shall further highlight the following characteristics:
\begin{enumerate}
\item The curvature of $\mathcal{S}$ and that of $\mathcal{D}$ are opposite in sign with $K = +1$ for $\mathcal{S}$ and $K = -1$ for $\mathcal{D}$, which is reflected in the functional forms in (\ref{bhat}) and (\ref{eq:comparisonpoint1}), respectively.
\item The spherical cosine function on the spherical space $\mathcal{S}$ is replaced by the hyperbolic cosine function on the hyperbolic space $\mathcal{D}$. 
\item Since $\textbf{d}_{\mathcal{D}}$ is a metric, we must have $\textbf{d}_{\mathcal{D}}(\P,\Q)\geq 0$, which means $\Psi(\P,\Q)\geq 0$ must hold having $\cosh(0)=1$ and $\cosh(r)$ monotonically increasing in $r\in\mathbb{R}_{+}$ -- in particular, $\Psi=0$ when $\P=\Q$, and $\Psi$ is strictly positive otherwise.
\item $\Psi$ is symmetric in $\P$ and $\Q$ such that $\Psi(\P,\Q) = \Psi(\Q,\P)$ -- note that $\Phi$ is also symmetric in $\P$ and $\Q$ such that $\Phi(\P,\Q) = \Phi(\Q,\P)$.
\end{enumerate}
Accordingly, $\Psi$ in (\ref{eq:comparisonpoint1}) hosts desirable properties as a divergence metric, also shared by the spherical squared-Hellinger distance $\Phi$. Remark \ref{dualitystatementone} below serves to highlight the geometric relation, and is what motivates the rest of this paper.
\begin{rem}
\label{dualitystatementone}
In the Gaussian setting, $\Phi$ in (\ref{bhat}) and $\Psi$ in (\ref{eq:comparisonpoint1}) form a spherical-hyperbolic dual in the aforementioned geometric sense.
\end{rem}
Encouraged by this observation, we move one step further. It is possible to construct a natural embedding of $\mathcal{D}$ into a Hilbert space, e.g. into $\mathcal{L}^{2}$ -- here, natural embedding refers to an embedding where an isometry on $\mathcal{D}$ can be associated to an isometry on a given Hilbert space. As an example, such an embedding will later allow us to go beyond the geometric constraints of a two-dimensional domain and accommodate multivariate Gaussians. Accordingly, extending (\ref{eq:isometricinvariantofpoincaredisc11}) in the $\mathcal{L}^{2}$-sense, and using the language of measure theory as in (\ref{hellinger}), we propose an $\mathcal{L}^{2}$-embedded $\Psi$ (while keeping notations the same) as follows: 
\begin{align} 
\Psi(\P,\Q)&=\frac{2||p-q ||_{\mathcal{L}^{2}}^{2}}{(1-||p||_{\mathcal{L}^{2}}^{2})(1-||q||_{\mathcal{L}^{2}}^{2})} =\frac{2\int_{\Omega}\left( \dd\P(\omega) - \dd\Q(\omega) \right)^2}{\left(1 - \int_{\Omega}\dd\P(\omega)^2\right)\left(1 - \int_{\Omega}\dd\Q(\omega)^2\right)}, \label{measuretheorticexp}
\end{align}
for $||p||_{\mathcal{L}^{2}} < 1$ and $||q||_{\mathcal{L}^{2}}< 1$. We shall highlight the following: although we used the same notations, $\Psi$ in (\ref{eq:comparisonpoint1}) is not the same as $\Psi$ in (\ref{measuretheorticexp}). It is Remark \ref{dualitystatementone} that motivates us to further introduce (\ref{measuretheorticexp}).
Remark \ref{dualitystatement} below is a more detailed version of Proposal \ref{dualitystatementinitalpresentation} in the context above.
\begin{rem}
\label{dualitystatement}
In the Gaussian setting, $\Phi$ in (\ref{hellinger}) and $\Psi$ in (\ref{measuretheorticexp}) establish a spherical-hyperbolic duality under an $\mathcal{L}^{2}$-embedding, when
\begin{align}
\label{connectionformalised}
\text{$||p||_{\mathcal{L}^{2}} < 1$ \hspace{0.1in} \emph{and} \hspace{0.1in} $||q||_{\mathcal{L}^{2}}< 1$}.
\end{align}
\end{rem}
We are now in position to present the closed-form solution for the hyperbolic dual of the spherical squared-Hellinger distance we propose in (\ref{measuretheorticexp}).
\begin{prop}
\label{mainpropgeometric}
The divergence (\ref{measuretheorticexp}) on $\R$ is given by
\begin{align} 
\label{closedform}
\Psi(\P,\Q)&= \frac{ (\sigma_p\sqrt{\pi})^{-1} - 2\sqrt{2}\lambda(\Theta_p,\Theta_q) + (\sigma_q\sqrt{\pi})^{-1}}{(1 - (2\sigma_p\sqrt{\pi})^{-1})(1 - (2\sigma_q\sqrt{\pi})^{-1})},
\end{align}
where $\lambda(.)$ in (\ref{closedform}) is defined as follows:
\begin{align}
\label{lambdaexpression}
\lambda(\Theta_p,\Theta_q) = \exp\left( -\frac{(\mu_p - \mu_q)^2}{2(\sigma^2_p +\sigma^2_q)} \right)\left(\pi(\sigma^2_p + \sigma^2_q)\right)^{-\frac{1}{2}}.
\end{align}
Hence, the conditions in (\ref{connectionformalised}) materialize when
\begin{align}
&||p||_{\mathcal{L}^{2}} < 1 \Longleftrightarrow \sigma_p > \frac{1}{2\sqrt{\pi}} \hspace{0.15in} \emph{and} \hspace{0.15in} ||q||_{\mathcal{L}^{2}} < 1 \Longleftrightarrow \sigma_q > \frac{1}{2\sqrt{\pi}} \label{conditionssigmatwo}.
\end{align}
\end{prop}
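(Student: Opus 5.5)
The plan is to expand the numerator of (\ref{measuretheorticexp}) as $\|p-q\|_{\mathcal{L}^{2}}^{2} = \|p\|_{\mathcal{L}^{2}}^{2} - 2\langle p,q\rangle + \|q\|_{\mathcal{L}^{2}}^{2}$. This reduces everything to Gaussian integrals of two types: the squared norm of a single Gaussian density, and the overlap integral of two Gaussian densities. Both have closed forms obtained by completing the square in the exponent, so the argument is a direct computation.

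\emph{Self-integrals.} For $p$ with parameters $\Theta_p$,
\begin{align*}
\int_{\R} p(x)^2\,\dd x = \frac{1}{2\pi\sigma_p^2}\int_{\R} \exp\left(-\frac{(x-\mu_p)^2}{\sigma_p^2}\right)\dd x = \frac{1}{2\pi\sigma_p^2}\,\sigma_p\sqrt{\pi} = \frac{1}{2\sigma_p\sqrt{\pi}}.
\end{align*}
The same computation gives $\|q\|_{\mathcal{L}^{2}}^{2} = (2\sigma_q\sqrt{\pi})^{-1}$. This yields the denominator in (\ref{closedform}). It also yields the equivalences (\ref{conditionssigmatwo}) at once, since $(2\sigma_p\sqrt{\pi})^{-1}<1$ if and only if $\sigma_p > (2\sqrt{\pi})^{-1}$, and $\|p\|_{\mathcal{L}^{2}} < 1$ if and only if $\|p\|_{\mathcal{L}^{2}}^{2} < 1$.

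\emph{Cross term.} This is the main obstacle, though it is only a routine calculation. I would use the standard fact that the product of two Gaussian densities is a scaled Gaussian density. Equivalently, $\int_{\R} p(x)q(x)\,\dd x$ is the convolution $(p * \tilde q)(0)$, where $\tilde q(x) = q(-x)$. Its value is the $N(\mu_p-\mu_q,\ \sigma_p^2+\sigma_q^2)$ density evaluated at $0$:
\begin{align*}
\langle p,q\rangle = \frac{1}{\sqrt{2\pi(\sigma_p^2+\sigma_q^2)}}\exp\left(-\frac{(\mu_p-\mu_q)^2}{2(\sigma_p^2+\sigma_q^2)}\right) = \frac{1}{\sqrt{2}}\,\lambda(\Theta_p,\Theta_q),
\end{align*}
with $\lambda$ as in (\ref{lambdaexpression}). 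If one prefers not to use the convolution identity, the same result follows by completing the square in $x$ directly: collect the terms in $x^2$ and $x$, and the leftover constant is the exponential factor.

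\emph{Assembly.} Substituting the three pieces gives
\begin{align*}
2\|p-q\|_{\mathcal{L}^{2}}^{2} = (\sigma_p\sqrt{\pi})^{-1} - 2\sqrt{2}\,\lambda(\Theta_p,\Theta_q) + (\sigma_q\sqrt{\pi})^{-1},
\end{align*}
where the cross term is $2\cdot 2\cdot\lambda/\sqrt{2} = 2\sqrt{2}\lambda$. Dividing by the product $(1-\|p\|_{\mathcal{L}^{2}}^{2})(1-\|q\|_{\mathcal{L}^{2}}^{2})$ computed above gives (\ref{closedform}). Under (\ref{conditionssigmatwo}) both factors of that product are positive, so the expression is well defined and nonnegative.
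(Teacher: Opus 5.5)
Your proof is correct and takes essentially the same route as the paper. You expand $\|p-q\|_{\mathcal{L}^{2}}^{2}$ into self- and cross-integrals, evaluate them with the Gaussian product identity (your convolution-at-zero phrasing just repackages the paper's $h_{pq}$ computation) to get $\langle p,q\rangle = \lambda(\Theta_p,\Theta_q)/\sqrt{2}$ and $\|p\|_{\mathcal{L}^{2}}^{2} = (2\sigma_p\sqrt{\pi})^{-1}$, and read the conditions off the denominator, writing out the numerator expansion more explicitly than the paper does.
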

\begin{proof}
The proof follows from products of Gaussian densities on $\R$ as they appear in (\ref{measuretheorticexp}). More specifically, we have the following product:
\begin{align}
p(x)q(x) &= \frac{1}{2\pi \sigma_p\sigma_q}\exp\left( - \left(\frac{(x - \mu_p)^2}{2\sigma^2_p} + \frac{(x - \mu_q)^2}{2\sigma^2_q} \right)\right) = \frac{h_{pq}}{\sqrt{2\pi} \sigma_{pq}}\exp\left( - \frac{(x - \mu_{pq})^2}{2\sigma^2_{pq}}\right) \label{productexpression},
\end{align}
for any $x\in\R$, where the new terms in (\ref{productexpression}) are defined as follows:
\begin{align}
\mu_{pq} = \frac{\mu_p\sigma^2_q + \mu_q\sigma^2_p}{\sigma^2_p + \sigma^2_q}, \,\,\,\,\, \sigma_{pq} = \sqrt{\frac{\sigma^2_p\sigma^2_q}{\sigma^2_p + \sigma^2_q}} , \,\,\,\,\, h_{pq} = \frac{1}{\sqrt{2\pi(\sigma^2_p + \sigma^2_q)}}\exp\left( - \frac{(\mu_p - \mu_{q})^2}{2(\sigma^2_p + \sigma^2_q)}\right). \label{newterms}
\end{align}
Therefore, we have the following:
\begin{align}
4\int_{\Omega}\dd\P(\omega)\dd\Q(\omega) = 4 \frac{h_{pq}}{\sqrt{2\pi} \sigma_{pq}}\int_{\R} \exp\left( - \frac{(x - \mu_{pq})^2}{2\sigma^2_{pq}}\right)\dd x = 4h_{pq} = 2\sqrt{2}\lambda(\Theta_p,\Theta_q) \notag
\end{align}
where $\lambda(.)$ is as given in (\ref{lambdaexpression}). Accordingly, we also have the following products integrals:
\begin{align}
\int_{\Omega}\dd\P(\omega)^2 &= \frac{h_{pp}}{\sqrt{2\pi} \sigma_{pp}}\int_{\R} \exp\left( - \frac{(x - \mu_{pp})^2}{2\sigma^2_{pp}}\right)\dd x = h_{pp} = \frac{1}{2\sigma_{p}\sqrt{\pi}} \notag \\
\int_{\Omega}\dd\Q(\omega)^2 &= \frac{h_{qq}}{\sqrt{2\pi} \sigma_{qq}}\int_{\R} \exp\left( - \frac{(x - \mu_{qq})^2}{2\sigma^2_{qq}}\right)\dd x = h_{qq} = \frac{1}{2\sigma_{q}\sqrt{\pi}}, \notag
\end{align}
and the expression given in (\ref{closedform}) follows. Finally, the statement in (\ref{conditionssigmatwo}) follows from the denominator of (\ref{closedform}) for $||p||_{\mathcal{L}^{2}} < 1$ and $||q||_{\mathcal{L}^{2}}< 1$. 
\end{proof}
Using the closed-form expression (\ref{closedform}), we can study the asymptotic behaviour of $\Psi(\P,\Q)$ across relevant parametric limits, for which we omit the proof since it follows directly from (\ref{closedform}).
\begin{coro}
The following limits hold:
\begin{align}
\Psi(\P,\Q) \rightarrow \frac{2\sigma_p\sqrt{\pi}}{2\sigma^2_p\pi - \sigma_p\sqrt{\pi}} \hspace{0.1in} \text{as $\sigma_q \rightarrow \infty$} \,\,\,\, \text{and} \,\,\,\,
\Psi(\P,\Q) \rightarrow \frac{2\sigma_q\sqrt{\pi}}{2\sigma^2_q\pi - \sigma_q\sqrt{\pi}} \hspace{0.1in} \text{as $\sigma_p \rightarrow \infty$} \label{limitone}
\end{align}
for any $\mu_p,\mu_q$. In addition, the double-limit satisfies
\begin{align}
\Psi(\P,\Q) &\rightarrow 0 \hspace{0.1in} \text{as $\sigma_p, \sigma_q \rightarrow \infty$} \label{limitthree}
\end{align}
for any $\mu_p,\mu_q$. 
\end{coro}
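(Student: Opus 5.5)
The plan is to read the limits directly off the closed form (\ref{closedform}) of Proposition \ref{mainpropgeometric}. Throughout we stay in the regime (\ref{conditionssigmatwo}), so that $\sigma_p,\sigma_q>\frac{1}{2\sqrt{\pi}}$ and both denominator factors are strictly positive. The only step needing real care is controlling the cross term $\lambda(\Theta_p,\Theta_q)$ uniformly in the means. From (\ref{lambdaexpression}), since the exponential factor lies in $(0,1]$, we have
\begin{align}
0<\lambda(\Theta_p,\Theta_q)\le \left(\pi(\sigma_p^2+\sigma_q^2)\right)^{-\frac{1}{2}}, \notag
\end{align}
and this bound does not depend on $\mu_p,\mu_q$. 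Hence $\lambda\to 0$ whenever $\sigma_p\to\infty$ or $\sigma_q\to\infty$, for arbitrary fixed (or even varying) means. This settles the phrase ``for any $\mu_p,\mu_q$''.

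For the first limit in (\ref{limitone}), I fix $\sigma_p$ and send $\sigma_q\to\infty$. In the numerator, $(\sigma_q\sqrt{\pi})^{-1}\to0$ and $2\sqrt{2}\lambda\to 0$ by the bound above, so the numerator tends to $(\sigma_p\sqrt{\pi})^{-1}$. In the denominator, $(1-(2\sigma_q\sqrt{\pi})^{-1})\to1$, so the denominator tends to $1-(2\sigma_p\sqrt{\pi})^{-1}$, which is nonzero by (\ref{conditionssigmatwo}). The quotient of the limits is therefore
\begin{align}
\frac{(\sigma_p\sqrt{\pi})^{-1}}{1-(2\sigma_p\sqrt{\pi})^{-1}}=\frac{2}{2\sigma_p\sqrt{\pi}-1}=\frac{2\sigma_p\sqrt{\pi}}{2\sigma_p^2\pi-\sigma_p\sqrt{\pi}}, \notag
\end{align}
where the last form comes from multiplying numerator and denominator by $\sigma_p\sqrt{\pi}$. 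This is exactly the stated expression. The second limit in (\ref{limitone}) follows by the same computation with $p$ and $q$ interchanged, using the symmetry of (\ref{closedform}) and (\ref{lambdaexpression}) in $(\Theta_p,\Theta_q)$.

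For the double limit (\ref{limitthree}), I let $\sigma_p,\sigma_q\to\infty$ jointly. All three numerator terms are bounded by quantities that tend to $0$: namely $(\sigma_p\sqrt{\pi})^{-1}$, $(\sigma_q\sqrt{\pi})^{-1}$, and $2\sqrt{2}(\pi(\sigma_p^2+\sigma_q^2))^{-1/2}$. Both denominator factors tend to $1$, so for $\sigma_p,\sigma_q$ large the denominator is bounded below by, say, $1/4$. Therefore $\Psi(\P,\Q)\to 0$, uniformly in $\mu_p,\mu_q$. The same result is also consistent with letting $\sigma_p\to\infty$ in the first iterated limit $2/(2\sigma_p\sqrt{\pi}-1)$.
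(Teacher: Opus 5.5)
Your proposal is correct and takes the same route as the paper, which omits the proof because the limits follow directly from the closed form (\ref{closedform}); your computation $\frac{(\sigma_p\sqrt{\pi})^{-1}}{1-(2\sigma_p\sqrt{\pi})^{-1}}=\frac{2}{2\sigma_p\sqrt{\pi}-1}$ reproduces the stated limit exactly. Your uniform bound $0<\lambda\le(\pi(\sigma_p^2+\sigma_q^2))^{-1/2}$ also explicitly justifies the ``for any $\mu_p,\mu_q$'' clause, which the paper leaves implicit.
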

It can be seen from (\ref{limitone})-(\ref{limitthree}) that the mean parameters $\mu_p$ and $\mu_q$, when fixed, lose their impact on $\Psi$ asymptotically as one or both of the variance parameters diverge -- i.e. if any of the Gaussian measures have a significantly large variance, then the (fixed) mean parameter is decreasingly important in distinguishing the difference between those measures. From (\ref{limitthree}) we further see that two Gaussian measures with significantly large variances are less distinguishable from each other. 

\subsection*{\Poincare Disc Radius Generalisation}
Since the core objective of this paper has been to present the aforementioned geometric connection, our focus was on the \Poincare \emph{unit} disc. However, conditions in (\ref{conditionssigmatwo}) may render (\ref{closedform}) too restrictive for practical purposes. Accordingly, one can generalise the disc (\ref{hyoerbolicspacetwo}) as follows:
\begin{align}
\label{hyoerbolicspacetwogeneral}
\mathcal{D}=\{z^{*}\in\mathbb{C} : ||z^{*}|| < R \},
\end{align}
for any radius $0 < R < \infty$. Accordingly, the curvature of the disc is $K= - 1/ R^2$. Following the same logic, we reach a generalised $\mathcal{L}^{2}$-embedded $\Psi$ as 
\begin{align} 
\Psi(\P,\Q)&=\frac{2R^2||p-q ||_{\mathcal{L}^{2}}^{2}}{(R^2-||p||_{\mathcal{L}^{2}}^{2})(R^2-||q||_{\mathcal{L}^{2}}^{2})} =\frac{2R^2\int_{\Omega}\left( \dd\P(\omega) - \dd\Q(\omega) \right)^2}{\left(R^2 - \int_{\Omega}\dd\P(\omega)^2\right)\left(R^2 - \int_{\Omega}\dd\Q(\omega)^2\right)}, \label{measuretheorticexpgeneral}
\end{align}
given that the following conditions are satisfied: 
\begin{align}
||p||_{\mathcal{L}^{2}} < R \,\,\, \text{and} \,\,\, ||q||_{\mathcal{L}^{2}}< R \label{connectionformalisedgeneral}. 
\end{align}
Accordingly, we have the following generalisation for Proposition \ref{mainpropgeometric}, for which we omit the proof in order to save space.
\begin{prop}
\label{mainpropgeometricgeneral}
The divergence (\ref{measuretheorticexpgeneral}) on $\R$ is given by
\begin{align} 
\label{closedformgeneral}
\Psi(\P,\Q)&= \frac{R^2\left( (\sigma_p\sqrt{\pi})^{-1} - 2\sqrt{2}\lambda(\Theta_p,\Theta_q) + (\sigma_q\sqrt{\pi})^{-1}\right)}{(R^2 - (2\sigma_p\sqrt{\pi})^{-1})(R^2 - (2\sigma_q\sqrt{\pi})^{-1})},
\end{align}
where $\lambda(.)$ in (\ref{closedformgeneral}) is defined as follows:
\begin{align}
\label{lambdaexpressiongeneral}
\lambda(\Theta_p,\Theta_q) = \exp\left( -\frac{(\mu_p - \mu_q)^2}{2(\sigma^2_p +\sigma^2_q)} \right)\left(\pi(\sigma^2_p + \sigma^2_q)\right)^{-\frac{1}{2}}.
\end{align}
Hence, the conditions in (\ref{connectionformalisedgeneral}) materialize when
\begin{align}
&||p||_{\mathcal{L}^{2}} < R \Longleftrightarrow \sigma_p > \frac{1}{2R^2\sqrt{\pi}} \hspace{0.15in} \emph{and} \hspace{0.15in} ||q||_{\mathcal{L}^{2}} < R \Longleftrightarrow \sigma_q > \frac{1}{2R^2\sqrt{\pi}} \label{conditionssigmatwogeneral}.
\end{align}
\end{prop}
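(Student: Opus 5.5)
The plan is to reuse the computations from the proof of Proposition \ref{mainpropgeometric} without change, since the generalised divergence (\ref{measuretheorticexpgeneral}) involves exactly the same three $\mathcal{L}^{2}$ quantities: $||p||^2_{\mathcal{L}^2}$, $||q||^2_{\mathcal{L}^2}$ and $\langle p,q\rangle_{\mathcal{L}^2}$. Only the algebraic way they are assembled differs, namely through the constants $R^2$.

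First, I expand the numerator as
\begin{align}
||p-q||^2_{\mathcal{L}^2} = ||p||^2_{\mathcal{L}^2} - 2\langle p,q\rangle_{\mathcal{L}^2} + ||q||^2_{\mathcal{L}^2}. \notag
\end{align}
I then invoke the Gaussian product identity (\ref{productexpression}) with the terms in (\ref{newterms}). Integrating the normalised Gaussian factor gives $\langle p,q\rangle_{\mathcal{L}^2} = h_{pq}$. Comparing $h_{pq}$ with (\ref{lambdaexpressiongeneral}) shows $h_{pq} = \lambda(\Theta_p,\Theta_q)/\sqrt{2}$. Specialising to $q=p$ gives $||p||^2_{\mathcal{L}^2} = h_{pp} = (2\sigma_p\sqrt{\pi})^{-1}$, and likewise $||q||^2_{\mathcal{L}^2} = (2\sigma_q\sqrt{\pi})^{-1}$. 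These three identities were already established in the proof of Proposition \ref{mainpropgeometric}, so I simply cite them.

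Next, I substitute into (\ref{measuretheorticexpgeneral}). The numerator becomes
\begin{align}
2R^2\left((2\sigma_p\sqrt{\pi})^{-1} - \sqrt{2}\lambda(\Theta_p,\Theta_q) + (2\sigma_q\sqrt{\pi})^{-1}\right) = R^2\left((\sigma_p\sqrt{\pi})^{-1} - 2\sqrt{2}\lambda(\Theta_p,\Theta_q) + (\sigma_q\sqrt{\pi})^{-1}\right), \notag
\end{align}
and the denominator becomes $(R^2 - (2\sigma_p\sqrt{\pi})^{-1})(R^2 - (2\sigma_q\sqrt{\pi})^{-1})$. Together these give (\ref{closedformgeneral}). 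As a consistency check, setting $R=1$ recovers (\ref{closedform}).

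For the conditions (\ref{conditionssigmatwogeneral}), both sides of $||p||_{\mathcal{L}^2} < R$ are nonnegative, so the inequality is equivalent to $||p||^2_{\mathcal{L}^2} < R^2$. Using the value of $||p||^2_{\mathcal{L}^2}$ above, this reads $(2\sigma_p\sqrt{\pi})^{-1} < R^2$. Since $\sigma_p > 0$, this is equivalent to $\sigma_p > (2R^2\sqrt{\pi})^{-1}$; the same argument applies to $q$. These conditions also make both denominator factors strictly positive, so the expression is well defined.

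There is no genuine obstacle here. The only point needing care is the bookkeeping of the factor relating $h_{pq}$ to $\lambda$ (the $\sqrt{2}$, and hence the $2\sqrt{2}$ after multiplying by $2$). The other subtlety is to remember that the condition involves the squared norm compared with $R^2$, not $R$, which is why $R^2$ rather than $R$ appears in (\ref{conditionssigmatwogeneral}).
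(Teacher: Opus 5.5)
Your proposal is correct and is the argument the paper intends: the paper omits this proof, but it is the Gaussian-product computation from the proof of Proposition \ref{mainpropgeometric}, giving $\|p\|^2_{\mathcal{L}^2}=(2\sigma_p\sqrt{\pi})^{-1}$, $\|q\|^2_{\mathcal{L}^2}=(2\sigma_q\sqrt{\pi})^{-1}$ and $\langle p,q\rangle_{\mathcal{L}^2}=h_{pq}=\lambda(\Theta_p,\Theta_q)/\sqrt{2}$, substituted into (\ref{measuretheorticexpgeneral}) with $1$ replaced by $R^2$. Your handling of the $\sqrt{2}$ factor and your reduction of the norm condition to $(2\sigma_p\sqrt{\pi})^{-1}<R^2$ are both correct.
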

Proposition \ref{mainpropgeometricgeneral} introduces $0 < R < \infty$ that can be controlled based on the application and data at hand. On the other hand, rewriting $\Psi$, we highlight that 
\begin{align}
\Psi(\P,\Q) = \frac{2||p-q ||_{\mathcal{L}^{2}}^{2}}{R^2\left(1-\frac{||p||_{\mathcal{L}^{2}}^{2}}{R^2}\right)\left(1-\frac{||q||_{\mathcal{L}^{2}}^{2}}{R^2}\right)} \rightarrow 0 \,\,\,\, \text{as} \,\,\,\, R \rightarrow \infty. \label{convergencezero} 
\end{align}
Accordingly, the divergence metric gets progressively weaker to distinguish $\P$ and $\Q$ as $R \rightarrow \infty$, where the curvature $K \rightarrow 0$ in the limit, and the geometry converges to that of (flat) Euclidean. 
\begin{rem}
The radius $R$ can be interpreted as a curvature tuner between the \Poincare disc and the Euclidean space. The choice of curvature provides a different $\Psi$ value between $\P$ and $\Q$, which can be employed as a learned hyperparameter.
\end{rem}
As an example, the closed-form expression (\ref{closedformgeneral}) can be used to compare any two $\R$-valued Gaussian processes on a distributional basis subject to condition (\ref{conditionssigmatwogeneral}). As a canonical family, we can analyze two $\R$-valued Brownian motions, as per below.
\begin{ex}
\label{examplegaussianprocesscomparegeneral}
Let $\{W^{(p)}_t\}_{t\geq0}$ be a drifted $\P$-Brownian motion with 
\begin{align}
\E^{\P}[W^{(p)}_t] = t\mu_p \,\,\,\,\, \text{and} \,\,\,\,\, \text{Var}^{\P}[W^{(p)}_t] = t\sigma^2_p. \notag
\end{align}
Also, let $\{W^{(q)}_t\}_{t\geq0}$ be a drifted $\Q$-Brownian motion with 
\begin{align}
\E^{\Q}[W^{(q)}_t] = t\mu_q\,\,\,\,\, \text{and} \,\,\,\,\, \text{Var}^{\Q}[W^{(q)}_t] = t\sigma^2_q. \notag 
\end{align}
Define two Gaussian processes $\{Z^{(p)}_t\}_{t\geq0}$ and $\{Z^{(q)}_t\}_{t\geq0}$ by
\begin{align}
Z^{(p)}_t = Z^{(p)} + W^{(p)}_t \,\,\,\, \text{and} \,\,\,\, Z^{(q)}_t = Z^{(q)} + W^{(q)}_t \notag
\end{align}
for every $t\geq 0$, where $Z^{(p)}$ is mutually independent Gaussian with 
\begin{align}
\E^{\P}[Z^{(p)}] = \eta_p \,\,\,\, \text{and} \,\,\,\, \text{Var}^{\P}[Z^{(p)}] = \varsigma^2_p > (4R^4\pi)^{-1}, \notag 
\end{align}
and $Z^{(q)}$ is mutually independent Gaussian with 
\begin{align}
\E^{\Q}[Z^{(q)}] = \eta_q \,\,\,\, \text{and} \,\,\,\, \text{Var}^{\Q}[Z^{(q)}] = \varsigma^2_q > (4R^4\pi)^{-1}. \notag 
\end{align}
Then, the $\Psi$-divergence between $\{Z^{(p)}_t\}_{t\geq0}$ and $\{Z^{(q)}_t\}_{t\geq0}$ is as follows:
\begin{align} 
\label{brownianexamplegeneralextended}
\Psi(\P,\Q)&= \frac{R^2\left(\left(\sqrt{(\varsigma^2_p + t\sigma^2_p)\pi}\right)^{-1} - 2\sqrt{2}\lambda_{t}(\Theta_p,\Theta_q) + \left(\sqrt{(\varsigma^2_q + t\sigma^2_q)\pi}\right)^{-1}\right)}{\left(R^2 - \left(2\sqrt{\left(\varsigma^2_p + t\sigma^2_p\right)\pi}\right)^{-1}\right)\left(R^2 - \left(2\sqrt{\left(\varsigma^2_q + t\sigma^2_q\right)\pi}\right)^{-1}\right)},
\end{align}
for every $t\geq 0$, given that
\begin{align}
\lambda_t(\Theta_p,\Theta_q) = \exp\left( -\frac{(\eta_p - \eta_q + t(\mu_p  - \mu_q) )^2}{2(\varsigma^2_p +\varsigma^2_q + t(\sigma^2_p  + \sigma^2_q))} \right)\left(\pi(\varsigma^2_p +\varsigma^2_q + t(\sigma^2_p  + \sigma^2_q))\right)^{-\frac{1}{2}}. \notag
\end{align}
Note that $\Psi$-divergence in (\ref{brownianexamplegeneralextended}) holds for every $t\geq 0$, since 
\begin{align}
\varsigma_p > (2R^2\sqrt{\pi})^{-1} \,\,\,\, \text{and} \,\,\,\, \varsigma_q> (2R^2\sqrt{\pi})^{-1} \notag 
\end{align}
for $Z^{(p)}$ and $Z^{(q)}$ as Gaussian random initial values, mutually independent from $\{W^{(p)}_t\}_{t\geq0}$ and $\{W^{(q)}_t\}_{t\geq0}$, respectively. Note that if we would have $\{Z^{(p)}_t\}_{t\geq0} = \{W^{(p)}_t\}_{t\geq0}$ and $\{Z^{(q)}_t\}_{t\geq0} = \{W^{(q)}_t\}_{t\geq0}$ instead, then
\begin{align} 
\label{brownianexample}
\Psi(\P,\Q)&= \frac{ R^2\left((\sigma_p\sqrt{t\pi})^{-1} - 2\sqrt{2}\lambda_{t}(\Theta_p,\Theta_q) + (\sigma_q\sqrt{t\pi})^{-1}\right)}{(R^2 - (2\sigma_p\sqrt{t\pi})^{-1})(R^2 - (2\sigma_q\sqrt{t\pi})^{-1})} \,\,\,\, \text{\emph{for} $t > \max\left(\frac{1}{4R^4\sigma^2_p\pi} \, , \, \frac{1}{4R^4\sigma^2_q\pi}\right)$},
\end{align}
given that
\begin{align}
\lambda_t(\Theta_p,\Theta_q) = \exp\left( -\frac{t(\mu_p - \mu_q)^2}{2(\sigma^2_p +\sigma^2_q)} \right)\left(t\pi(\sigma^2_p + \sigma^2_q)\right)^{-\frac{1}{2}}. \notag
\end{align}
\end{ex}
For certain practical uses, if (\ref{convergencezero}) is a limitation, one can produce a scaled $\Psi$ measure $\hat{\Psi}$ given by the following:
\begin{align}
\hat{\Psi}(\P,\Q) = \frac{R^2}{2}\Psi(\P,\Q), \notag
\end{align}
which satisfies
\begin{align}
\hat{\Psi}(\P,\Q) \rightarrow ||p-q ||_{\mathcal{L}^{2}}^{2} = \int_{\Omega}\left( \dd\P(\omega) - \dd\Q(\omega) \right)^2 \,\,\,\, \text{as} \,\,\,\, R \rightarrow \infty. \label{eucledianconvergence}
\end{align}
Although the scaled divergence measure $\hat{\Psi}$ loses its direct geometric interpretation as per above, it converges to the squared Euclidean distance as $R$ goes to infinity. It is worthwhile to note that property (\ref{eucledianconvergence}) may be more suitable than (\ref{convergencezero}) for certain applications.

\subsection*{Multivariate Generalisation}
The closed-form expression in (\ref{closedform}) can be generalised to multivariate Gaussian measures. Accordingly, let $\P$ and $\Q$ be Gaussian measures on $\R^n$, where $p:\R^n\rightarrow\R_+$ and $q:\R^n\rightarrow\R_+$ are the corresponding Gaussian probability density functions for some $n\geq 1$. Denote the parameters of $p$ and $q$ as $\boldsymbol{\Theta}_p = \{\boldsymbol{\mu}_p,\boldsymbol{\Sigma}_p\}$ and $\boldsymbol{\Theta}_q = \{\boldsymbol{\mu}_q,\boldsymbol{\Sigma}_q\}$, respectively. Hence, writing $|\cdot|$ as the determinant, we have the following density expressions:
\begin{align}
p(\boldsymbol{x}) &= \frac{1}{(2\pi)^{\frac{n}{2}}\sqrt{|\boldsymbol{\Sigma}_p|}}\exp\left(-\frac{1}{2}(\boldsymbol{x} - \boldsymbol{\mu}_p )^{\tp}\boldsymbol{\Sigma}_p^{-1}(\boldsymbol{x} - \boldsymbol{\mu}_p )\right) = \exp\left( \boldsymbol{\Lambda}_p + \hat{\boldsymbol{\mu}}^{\tp}_p\boldsymbol{x} - \frac{1}{2}\boldsymbol{x}^{\tp}\boldsymbol{\Sigma}_p^{-1}\boldsymbol{x} \right) \notag
\end{align}
and
\begin{align}
q(\boldsymbol{x}) &= \frac{1}{(2\pi)^{\frac{n}{2}}\sqrt{|\boldsymbol{\Sigma}_q|}}\exp\left(-\frac{1}{2}(\boldsymbol{x} - \boldsymbol{\mu}_q )^{\tp}\boldsymbol{\Sigma}_q^{-1}(\boldsymbol{x} - \boldsymbol{\mu}_q )\right) = \exp\left( \boldsymbol{\Lambda}_q + \hat{\boldsymbol{\mu}}^{\tp}_q\boldsymbol{x} - \frac{1}{2}\boldsymbol{x}^{\tp}\boldsymbol{\Sigma}_q^{-1}\boldsymbol{x} \right), \notag
\end{align}
given that
\begin{align}
&\hat{\boldsymbol{\mu}}_i = \boldsymbol{\Sigma}^{-1}_i\boldsymbol{\mu}_i \notag \\
&\boldsymbol{\Lambda}_i = -\frac{1}{2}\left(n \log(2\pi) - \log(|\boldsymbol{\Sigma}^{-1}_i|) + \hat{\boldsymbol{\mu}}^{\tp}_i\boldsymbol{\Sigma}_i\hat{\boldsymbol{\mu}}_i \right), \notag
\end{align}
for $i\in\{p,q\}$. Therefore, we can write the following:
\begin{align}
p(\boldsymbol{x})q(\boldsymbol{x}) &= \exp\left(\boldsymbol{\Lambda}_p + \boldsymbol{\Lambda}_q + \left(\hat{\boldsymbol{\mu}}^{\tp}_p + \hat{\boldsymbol{\mu}}^{\tp}_q \right)\boldsymbol{x} - \frac{1}{2}\boldsymbol{x}^{\tp}\left(\boldsymbol{\Sigma}_p^{-1} + \boldsymbol{\Sigma}_q^{-1}\right)\boldsymbol{x}\right) \notag \\
&= \exp\left(\boldsymbol{\Lambda}_p + \boldsymbol{\Lambda}_q + \hat{\boldsymbol{\mu}}^{\tp}_{pq}\boldsymbol{x} - \frac{1}{2}\boldsymbol{x}^{\tp}\boldsymbol{\Sigma}_{pq}^{-1}\boldsymbol{x}\right) \notag \\
&= \exp\left(\boldsymbol{\Lambda}_p + \boldsymbol{\Lambda}_q - \boldsymbol{\Lambda}_{pq}\right) \exp\left(\boldsymbol{\Lambda}_{pq} + \hat{\boldsymbol{\mu}}^{\tp}_{pq}\boldsymbol{x} - \frac{1}{2}\boldsymbol{x}^{\tp}\boldsymbol{\Sigma}_{pq}^{-1}\boldsymbol{x}\right) \notag
\end{align}
where we have
\begin{align}
\boldsymbol{\Lambda}_{pq} = -\frac{1}{2}\left(n \log(2\pi) - \log(|\boldsymbol{\Sigma}^{-1}_{pq}|) + \hat{\boldsymbol{\mu}}^{\tp}_{pq}\boldsymbol{\Sigma}_{pq}\hat{\boldsymbol{\mu}}_{pq} \right).\notag
\end{align}
It follows that
\begin{align}
\int_{\Omega}\dd\P(\omega)\dd\Q(\omega) = \exp\left(\boldsymbol{\Lambda}_p + \boldsymbol{\Lambda}_q - \boldsymbol{\Lambda}_{pq}\right),
\end{align}
and
\begin{align}
\int_{\Omega}\dd\P(\omega)^2 = \exp\left(2\boldsymbol{\Lambda}_p - \boldsymbol{\Lambda}_{pp}\right), \,\,\,\, \int_{\Omega}\dd\Q(\omega)^2 = \exp\left(2\boldsymbol{\Lambda}_q - \boldsymbol{\Lambda}_{qq}\right). \notag
\end{align}
Finally, as we insert these terms into (\ref{measuretheorticexp}), we get the following expession:
\begin{align} 
\label{closedformgeneralised}
\Psi(\P,\Q)&= \frac{2\left(\exp\left(2\boldsymbol{\Lambda}_p - \boldsymbol{\Lambda}_{pp}\right) - 2\exp\left(\boldsymbol{\Lambda}_p + \boldsymbol{\Lambda}_q - \boldsymbol{\Lambda}_{pq}\right) + \exp\left(2\boldsymbol{\Lambda}_q - \boldsymbol{\Lambda}_{qq}\right) \right)}{\left(1 - \exp\left(2\boldsymbol{\Lambda}_p - \boldsymbol{\Lambda}_{pp}\right)  \right) \left(1 - \exp\left(2\boldsymbol{\Lambda}_q - \boldsymbol{\Lambda}_{qq}\right) \right)},
\end{align}
as a multivariate generalisation to (\ref{closedform}) in Proposition \ref{mainpropgeometric}, with the boundary conditions:
\begin{align}
|\mathbf{\Sigma}_p| > \frac{1}{(4\pi)^n} \quad \text{and} \quad |\mathbf{\Sigma}_q| > \frac{1}{(4\pi)^n}
\end{align}
given that we have the following:
\begin{align}
||p||_{\mathcal{L}^{2}}^{2} = \left(2^n \pi^{n/2} \sqrt{|\mathbf{\Sigma}_p|}\right)^{-1} \,\,\,\, \text{and} \,\,\,\, ||q||_{\mathcal{L}^{2}}^{2} = \left(2^n \pi^{n/2} \sqrt{|\mathbf{\Sigma}_q|}\right)^{-1}. \notag
\end{align}
The expression in (\ref{closedformgeneralised}) can be further simplified, which we leave to the interested reader. Note that under the \Poincare disc radius generalisation, we have
\begin{align} 
\label{closedformgeneralisedgeneral}
\Psi(\P,\Q)&= \frac{2R^2\left(\exp\left(2\boldsymbol{\Lambda}_p - \boldsymbol{\Lambda}_{pp}\right) - 2\exp\left(\boldsymbol{\Lambda}_p + \boldsymbol{\Lambda}_q - \boldsymbol{\Lambda}_{pq}\right) + \exp\left(2\boldsymbol{\Lambda}_q - \boldsymbol{\Lambda}_{qq}\right) \right)}{\left(R^2 - \exp\left(2\boldsymbol{\Lambda}_p - \boldsymbol{\Lambda}_{pp}\right)  \right) \left(R^2 - \exp\left(2\boldsymbol{\Lambda}_q - \boldsymbol{\Lambda}_{qq}\right) \right)},
\end{align}
as an extension to (\ref{closedformgeneralised}), where we have the boundary conditions:
\begin{align}
|\mathbf{\Sigma}_p| > \frac{1}{(4\pi)^nR^4} \quad \text{and} \quad |\mathbf{\Sigma}_q| > \frac{1}{(4\pi)^nR^4}, \label{lowerboundaryhghdim}
\end{align}
which accounts for the choice of $R$ as the curvature hyperparameter. We shall highlight that the lower bound on the generalized variance decays exponentially as $n$ increases. Accordingly, in high-dimensional spaces, the statistical requirement to prevent the proposed model (\ref{closedformgeneralisedgeneral}) from breaking down becomes progressively small. Thus, under the generalised framework, high-dimensional models are allowed to hold increasingly sharp, low variance representations without breaching the boundary dictated by the \Poincare disc.
\begin{rem}
In statistical machine learning applications, high dimensional spaces usually become challenging in practice, commonly referred to as the curse of dimensionality. In (\ref{lowerboundaryhghdim}), high dimensionality acts as an increasingly strong buffer to relax the lower boundary.
\end{rem}

\section{Conclusion}
Given the hyperbolic geometric connection of (\ref{eq:comparisonpoint1}) to the spherical squared-Hellinger distance, we are encouraged to communicate the $\mathcal{L}^{2}$-embedded (\ref{closedform}) -- and more generally (\ref{closedformgeneralisedgeneral}) -- as an alternative measure of statistical divergence between Gaussian distributions in agreement with our Proposal \ref{dualitystatementinitalpresentation}. We refer to \cite{1c,1b} for applications of divergence measures in variational Bayesian inference. We also refer to \cite{3} for quantum counterparts of the squared-Hellinger distance within quantum information theory, which lend themselves for future research in connection with the spherical-hyperbolic duality we presented. We hope to apply our proposed measure in various practical fields that may benefit from our proposal, especially where limitations of applying flat Euclidean metrics on inherently curved data structures present a fundamental challenge in the machine learning task at hand.

\subsection*{Appendix}
In this section, we shall provide visual heatmaps of the proposed divergence metric provided in (\ref{closedformgeneralisedgeneral}) for demonstration purposes. In order to save space, we shall fix the curvature tuner $R=1$, and work with (\ref{closedform}) on the \Poincare unit disc. We highlight that the numerical analysis provided below is not exhaustive, and one may visualize different patterns of the proposed divergence measure.
First, we shall provide examples as we set $\sigma_p$ and $\sigma_q$ as the axes of the heatmaps, and change $\mu_p$ and $\mu_q$.
\\
\begin{figure}[htbp] 
  \centering
  \hfill
  \includegraphics[width=0.475\linewidth]{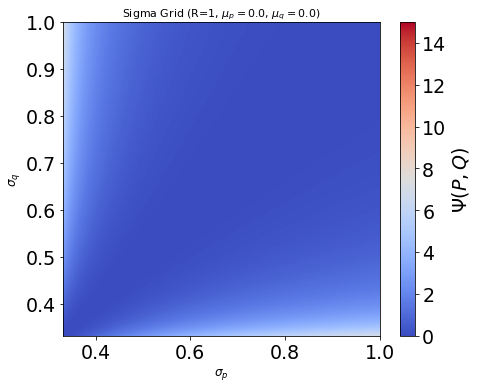}%
  \hfill\hfill
  \includegraphics[width=0.475\linewidth]{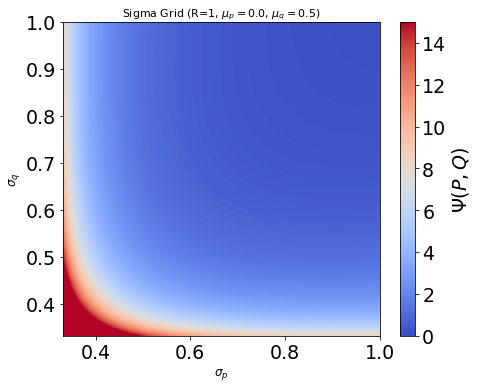}%
  \hfill\mbox{}
  
  \vspace{1em} 
  
  \hfill
  \includegraphics[width=0.475\linewidth]{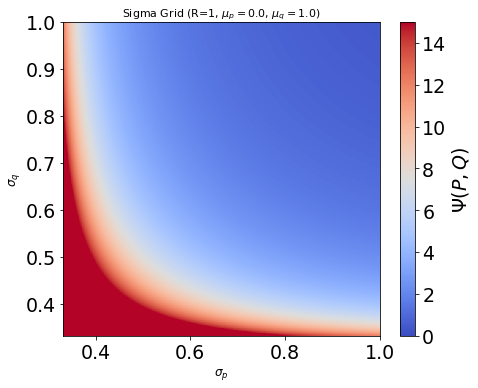}%
  \hfill\hfill
  \includegraphics[width=0.475\linewidth]{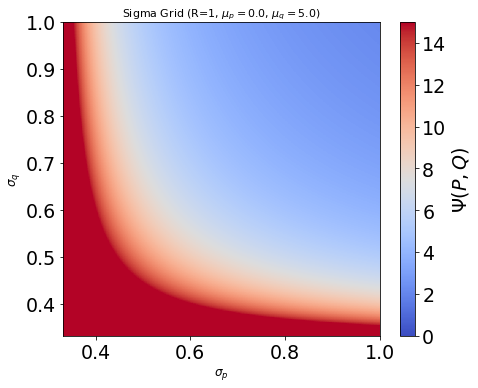}%
  \hfill\mbox{}
  
  \caption[Metric]{Heatmaps across different $\mu_p$ vs. $\mu_q$ input pairs. $R = 1$.}
  \label{fig:sigma_heatmaps}
\end{figure}
\\
The area of higher values (shown in red) of the divergence measure $\Psi(\P,\Q)$ increases as the asymmetry of the parameters $\mu_p$ and $\mu_q$ increases in magnitude, as per above. Second, we shall provide examples as we set $\mu_p$ and $\mu_q$ as the axes of the heatmaps, and change $\sigma_p$ and $\sigma_q$ pairs.
\\
\begin{figure}[htbp] 
  \centering
  \hfill
  \includegraphics[width=0.475\linewidth]{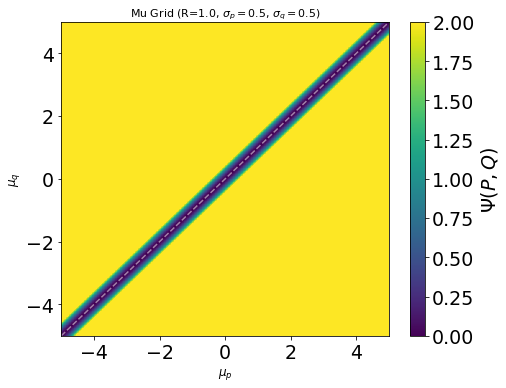}%
  \hfill\hfill
  \includegraphics[width=0.475\linewidth]{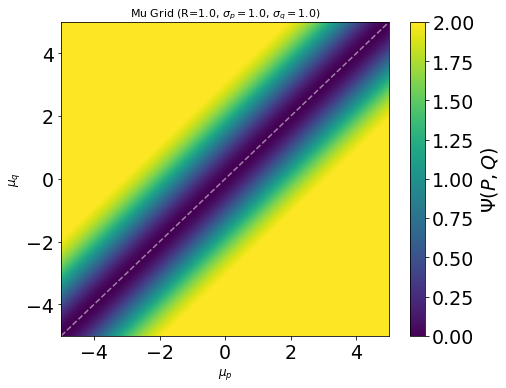}%
  \hfill\mbox{}
  
  \vspace{1em} 
  
  \hfill
  \includegraphics[width=0.475\linewidth]{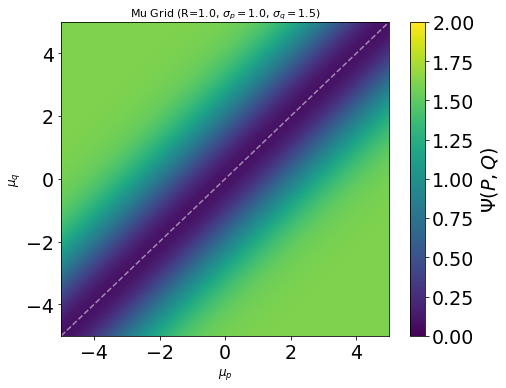}%
  \hfill\hfill
  \includegraphics[width=0.475\linewidth]{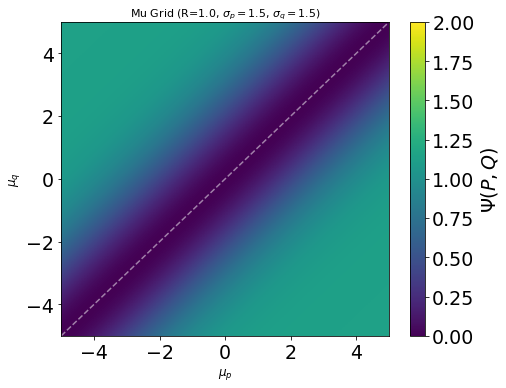}%
  \hfill\mbox{}
  
  \caption[Metric]{Heatmaps across different $\sigma_p$ vs. $\sigma_q$ input pairs. $R = 1$.}
  \label{fig:mu_heatmaps}
\end{figure}
\\
The area of lower values (shown in blue) of the divergence measure $\Psi(\P,\Q)$ increases as the parameters $\sigma_p$ and $\sigma_q$ increase in magnitude, as per above.

\end{document}